\pgfplotsset{compat=newest}
\newtheorem{theorem}{Theorem}[section]
\newtheorem{remark}[theorem]{Remark}
\newtheorem{assumption}[theorem]{Assumption}
\newtheorem{definition}[theorem]{Definition}
\newtheorem{algorithm}[theorem]{Algorithm}
\newcommand{\uc}{u}
\newcommand\copyrighttext{%
  \footnotesize 
  \textcopyright 2024 IEEE. Personal use of this material is permitted.
  Permission from IEEE must be obtained for all other uses, in any current or future
  media, including reprinting/republishing this material for advertising or promotional
  purposes, creating new collective works, for resale or redistribution to servers or
  lists, or reuse of any copyrighted component of this work in other works.
}
\newcommand\copyrightnotice{%
    \begin{tikzpicture}[remember picture,overlay]
        \node[anchor=south,yshift=10pt] at (current page.south) {\fbox{\parbox{\dimexpr\textwidth-\fboxsep-\fboxrule\relax}{\copyrighttext}}};
    \end{tikzpicture}%
}
\title{\LARGE \bf
    Data-driven MPC with terminal conditions in the Koopman framework
}
\author{Karl Worthmann$^{1}$, Robin Strässer$^{2}$, Manuel Schaller$^{1}$, Julian Berberich$^{2}$, and Frank Allgöwer$^{2}$
\thanks{Karl Worthmann gratefully acknowledges funding by the German Research Foundation (DFG, project ID 507037103). 
F.\ Allgöwer is thankful that this work was funded by the Deutsche Forschungsgemeinschaft (DFG, German Research Foundation) under Germany's Excellence Strategy -- EXC 2075 -- 390740016 and within grant AL 316/15-1 -- 468094890.
R.\ Strässer thanks the Graduate Academy of the SC SimTech for its support.
}
\thanks{$^{1}$K.\ Worthmann and M.\ Schaller are with the Institute of Mathematics, Technische Universität Ilmenau, 99693 Ilmenau, Germany
        (e-mail: {\tt [karl.worthmann, manuel.schaller]@tu-ilmenau.de}).}%
\thanks{$^{2}$R.\ Strässer, J.\ Berberich, and F.\ Allgöwer are with the Institute for Systems Theory and Automatic Control, University of Stuttgart, 70550 Stuttgart, Germany
        (e-mail: {\tt [straesser, berberich, allgower]@ist.uni-stuttgart.de}).}%
}
\begin{document}
\maketitle
\copyrightnotice
\thispagestyle{empty}
\pagestyle{empty}

\begin{abstract}
    We investigate nonlinear model predictive control (MPC) with terminal conditions in the Koopman framework using extended dynamic mode decomposition (EDMD) to generate a data-based surrogate model for prediction and optimization. We rigorously show recursive feasibility and prove practical asymptotic stability w.r.t.\ the approximation accuracy. To this end, finite-data error bounds are employed. The construction of the terminal conditions is based on recently derived proportional error bounds to ensure the required Lyapunov decrease. Finally, we illustrate the effectiveness of the proposed data-driven predictive controller including the design procedure to construct the terminal region and controller.
\end{abstract}

\begin{keywords}
    Data-driven control, error bounds, nonlinear model predictive control, Koopman operator, SafEDMD
\end{keywords}

\section{Introduction}
Model predictive control (MPC) is a well-established advanced control methodology. 
The underlying idea is to solve, at each time instant, a finite-horizon optimal control problem based on the most-recent state measurement to evaluate the controller, see, e.g.,~\cite{grune:pannek:2017}.
While MPC is attractive due to the simplicity of the underlying idea and its ability to handle constrained nonlinear multi-input, multi-output systems, some care is in order to ensure proper functioning, see, e.g., \cite{muller2017quadratic}.
To this end, often terminal conditions~\cite{chen1998quasi} are used to ensure recursive feasibility and asymptotic stability.

A key requirement to apply MPC is a reliable model to accurately predict the system behavior in dependence of the control.
In this regard, data-driven methods have recently gained popularity, see, e.g.,~\cite{martin:schon:allgower:2023b} and the references therein.
In this paper, we focus on methods based on extended dynamic mode decomposition (EDMD;~\cite{williams2015data}), whose theoretical underpinning is the Koopman framework~\cite{mezic:2005}, see also the recent review article~\cite{BrunBudi22}.
The Koopman operator replaces the original highly-nonlinear dynamics by linear dynamics in the infinite-dimensional space of observable functions. 
Then, a finite-dimensional data-driven surrogate model is generated by EDMD using linear regression.
This approach was extended to systems with inputs (EDMDc: EDMD with control; \cite{brunton:brunton:proctor:kutz:2016}). 
An alternative approach exploits control affinity to construct a bilinear surrogate model~\cite{surana:2016,williams:hemati:dawson:kevrekidis:rowley:2016}, which exhibits a superior performance if direct state-control couplings are present, see, e.g., \cite{peitz:otto:rowley:2020} and~\cite{folkestad2021koopman} for an application and a discussion within the scope of MPC.

Whereas convergence of EDMD in the infinite-data limit was shown in~\cite{korda:mezic:2018b}, an essential tool for a thorough controller design with guarantees are finite-data error bounds. Here, to the best of our knowledge, Igor Mezi\'{c} was the first to rigorously establish bounds on the estimation error~\cite{mezic2022numerical} for deterministic systems and ergodic sampling. 
Then, the authors in~\cite{zhang:zuazua:2023} provided error bounds based on i.i.d.\ sampling before the first finite-data bounds on the approximation error for control systems were derived in~\cite{nuske:peitz:philipp:schaller:worthmann:2023,schaller:worthmann:philipp:peitz:nuske:2023}. 
For deterministic and stochastic continuous- and discrete-time systems in Polish spaces with i.i.d.\ and ergodic sampling, error estimates under non-restrictive assumptions were derived in~\cite{philipp2024extended}.

EDMD has been successfully applied in MPC~\cite{peitz:otto:rowley:2020,korda:mezic:2018a}, see also~\cite{zhang:pan:scattolini:yu:xu:2022} for a tube-based approach. 
These approaches have been shown to perform well in applications, e.g., autonomous driving~\cite{kanai:yamakita:2022}. 
However, the first rigorous closed-loop analysis of MPC using EDMD in the prediction step was only recently provided in~\cite{bold:grune:schaller:worthmann:2023}.
The key step in controller design with closed-loop guarantees was to adapt the regression problem in EDMD and, then, deduce error bounds, which are proportional to the control and the (lifted) state~\cite{bold:grune:schaller:worthmann:2023,strasser2024safedmd}.

In this article, we present an EDMD-based MPC scheme with terminal conditions with closed-loop stability guarantees and verified domain of attraction.
In particular, we rigorously show recursive feasibility and practical asymptotic stability.
Here, the term \emph{practical} results from an accumulation of the error along the predicted trajectories within the optimization step of the MPC algorithm.
The design of the terminal conditions relies on our recently proposed controller design framework \emph{SafEDMD}~\cite{strasser2024safedmd}. Contrary to established approaches, e.g., based on the linear-quadratic regulator and EDMDc~\cite{brunton:brunton:proctor:kutz:2016}, we provide rigorous closed-loop guarantees based on tailored finite-data error bounds and numerically demonstrate a significantly improved closed-loop performance.

The outline is as follows: In Section~\ref{sec:problem:formulation}, we briefly recap SafEDMD in the Koopman framework.
In Section~\ref{sec:MPC:scheme}, we present our EDMD-based MPC scheme. Then, the respective analysis (recursive feasibility, practical asymptotic stability) and the constructive design of the terminal conditions is presented in Section~\ref{sec:MPC:analysis}. Finally, the results are numerically validated before conclusions are drawn in Section~\ref{sec:conclusions}.

\noindent \textbf{Notation}: For non-negative integers $a,b$, we use the notation $[a:b] := \{ i \in \mathbb{N}_0 \mid a \leq i \leq b\}$. 
For two sets $A, B \subset \mathbb{R}^n$, $A \oplus B = \{ \mathbf{z} \in \mathbb{R}^n \mid \exists\,\mathbf{x} \in A, \mathbf{y} \in B: \mathbf{z} = \mathbf{x} + \mathbf{y} \}$ is the Pontryagin sum.
A continuous function $\alpha: \mathbb{R}_{\geq 0} \rightarrow \mathbb{R}_{\geq 0}$ is called of class~$\mathcal{K}$ if it is strictly increasing and zero at zero.
If $\alpha \in \mathcal{K}$ is, in addition, unbounded, $\alpha$ is of class~$\mathcal{K}_\infty$.
A continuous function $\beta: \mathbb{R}_{\geq 0} \times \mathbb{N}_0 \to \mathbb{R}_{\geq 0}$ is said to be of class~$\mathcal{KL}$ if $\beta(\cdot,k) \in \mathcal{K}_\infty$ holds and $\beta(r,\cdot)$ is strictly monotonically decreasing with $\lim_{t \rightarrow \infty} \beta(r,t) = 0$.
The closed $\varepsilon$-ball around $\mathbf{x} \in \mathbb{R}^n$ is denoted by $\mathcal{B}_\varepsilon(\mathbf{x})$.

\section{The Koopman operator and SafEDMD}
\label{sec:problem:formulation}

We consider continuous-time dynamical control systems governed by the control-affine ordinary differential equation 
\begin{equation}\label{eq:system:dynamics}
    \dot{x}(t) = g_0(x(t)) + \sum_{i=1}^{m} g_i(x(t))\, \uc_i(t)
\end{equation}
with maps $g_i \in \mathcal{C}^1(\mathbb{R}^{n},\mathbb{R}^{n})$, $i \in [0:m]$. For the locally integrable control function $\uc \in L^\infty_{\operatorname{loc}}(\mathbb{R}_{\geq 0},\mathbb{R}^{m})$, we have (local) existence and uniqueness of the respective (Carath\'{e}odory) solution $x(\cdot;\hat{\mathbf{x}},\uc)$ emanating from $\hat{\mathbf{x}} \in \mathbb{R}^{n}$.
Typically, control functions~$\uc$ are implemented in a sampled-data fashion with zero-order hold, i.e., 
\begin{equation}\label{eq:control:zoh}
    \uc(t) \equiv \mathbf{u}_k \in \mathbb{R}^{m} \quad\text{ on }\quad [k \Delta t, (k+1) \Delta t), k \in \mathbb{N}_0,
\end{equation}
for sampling period $\Delta t > 0$. 
Invoking autonomy of the maps $g_i$, $i \in [0:m]$, we define the discrete-time system dynamics
\begin{equation}\label{eq:system:sampled-data}
    \mathbf{x}^+ \hspace*{-0.5mm} = \hspace*{-0.25mm} f(\mathbf{x},\mathbf{u}) \hspace*{-0.5mm} := \hspace*{-0.25mm} \mathbf{x} + \hspace*{-0.5mm} \int_{0}^{\Delta t} \hspace*{-2mm} g_0(x(t;\mathbf{x},u)) + G(x(t;\mathbf{x},u)) \mathbf{u}\,\mathrm{d}t
\end{equation}
with $G(x(t;\mathbf{x},u)) \mathbf{u} = \sum_{i=1}^{m} g_i(x(t;\mathbf{x},u)) u_i$ for the constant control function $u(t) \equiv \mathbf{u}$.
Hence,
\[
    \textbf{x}_{\mathfrak{u}}(k+1;\hat{\textbf{x}}) = f(\textbf{x}_{\mathfrak{u}}(k;\hat{\textbf{x}}),\mathbf{u}_k) = x((k+1) \Delta t; \hat{\mathbf{x}},u)
\]
holds with the control function~$u$ defined by~\eqref{eq:control:zoh} and the state $\textbf{x}_{\mathfrak{u}}(k+1;\hat{\textbf{x}})$ generated by the discrete-time system~\eqref{eq:system:sampled-data} using the control-input sequence $\mathfrak{u} = (\textbf{u}_i)_{i=0}^k$. In particular, the discrete time~$k \in \mathbb{N}_0$ corresponds to the continuous time~$k \Delta t$. 
Since we consider the stabilization task, we assume that $g_0$ vanishes at the origin, i.e., $g_0(\textbf{0}) = \textbf{0}$. Then, the origin is an equilibrium for $\textbf{u} = \textbf{0}$ in the dynamics~\eqref{eq:system:sampled-data}.

For given control function $u(t) \equiv \mathbf{u}$, the Koopman identity
\begin{equation}\label{eq:Koopman:identity}
    (\mathcal{K}^t_{\mathbf{u}} \psi)(\hat{\textbf{x}}) = \psi(x(t;\hat{\textbf{x}},u))
\end{equation}
holds for all \emph{observables} $\psi \in L^2(\mathbb{R}^{n},\mathbb{R})$, $t \geq 0$, $\hat{\mathbf{x}} \in \mathbb{R}^{n}$. 
Here, $\mathcal{K}^t_{\mathbf{u}}$ represents the \emph{Koopman operator} of the respective semigroup $(\mathcal{K}^t_{\mathbf{u}})_{t \geq 0}$ of bounded linear operators. 
It is noteworthy that the Koopman operator~$\mathcal{K}^t_{\mathbf{u}}$ maps an observable (function) $\psi$ to another observable $\mathcal{K}^t_{\mathbf{u}} \psi$.

To derive a data-driven surrogate model of the Koopman operator, we collect finitely many, linearly independent observables in the dictionary $\mathcal{D} := \{ \psi_k, k \in [0:M] \}$, whose span, $\mathbb{V} := \operatorname{span}( \mathcal{D} )$, forms an $(M+1)$-dimensional subspace. 
On the convex and compact set $\mathbb{X} \subset \mathbb{R}^{n}$ containing the origin in its interior, the compression $P_{\mathbb{V}} \mathcal{K}^t_{\mathbf{u}}|_{\mathbb{V}}$ is approximated by linear regression using $d \in \mathbb{N}$ samples $(\psi_k(\hat{\mathbf{x}}_j),\psi_k(x(t;\hat{\mathbf{x}}_j,u)))$, $j \in [1:d]$, see, e.g., \cite{BrunBudi22}. 
Further, we set $\psi_0 \equiv 1$, $\psi_k(\mathbf{x}) = x_k$ for $k \in [1:n]$, and $\psi_k(0) = 0$ with $\psi_k \in \mathcal{C}^2(\mathbb{R}^{n},\mathbb{R})$ for $k \in [n+1:M]$, resulting in
\begin{equation}\label{eq:lifting-function}
    \Phi(\mathbf{x}) = \begin{bmatrix} 
        1 & x_1 & \cdots & x_n & \psi_{n+1}(\mathbf{x}) & \cdots & \psi_M(\mathbf{x}) 
    \end{bmatrix}^{\hspace*{-1mm}\top}\hspace*{-2mm},
\end{equation}
$\mathbf{x} = \begin{bmatrix} \textbf{0}_n & I_{n} & 0_{n \times M-n} \end{bmatrix} \Phi(\mathbf{x})$, and some constant $L_\Phi > 0$ such that $\| \mathbf{x} \| \leq \| \Phi(\mathbf{x}) - \Phi(\mathbf{0}) \| \leq L_{\Phi} \| \mathbf{x} \|$ holds on~$\mathbb{X}$.

We require the following assumption originally proposed in~\cite{goswami:paley:2021}, which states that the compression of the Koopman operator coincides with the restriction $\mathcal{K}^t_{\textbf{u}}|_{\mathbb{V}}$, see also \cite{korda:mezic:2020} for sufficient conditions and~\cite{strasser:schaller:worthmann:berberich:allgower:2024a} for a detailed discussion.
\begin{assumption}[Invariance of~$\mathbb{V}$]\label{ass:invariant-dictionary}
    For any $\psi \in \mathbb{V}$ and $u(t) \equiv \mathbf{u} \in \mathbb{R}^{m}$, let $\psi(x(t;\cdot,u)) \in \mathbb{V}$ hold for all $t \geq 0$.
\end{assumption}

As motivated in~\cite{peitz:otto:rowley:2020} and rigorously shown in~\cite{philipp:schaller:worthmann:peitz:nuske:2023b}, the Koopman operator approximately inherits control affinity: 
\begin{equation}\nonumber
    \mathcal{K}^t_{\textbf{u}} \approx \mathcal{K}^t_{\textbf{0}} + \sum_{i=1}^{m} u_i (\mathcal{K}^t_{\textbf{e}_i} - \mathcal{K}^t_{\textbf{0}}),
\end{equation}
where $\textbf{e}_i$ stands for the $i$th unit vector, $i\in[1:m]$. Here, we apply SafEDMD as proposed in~\cite[Sec.~3]{strasser2024safedmd}, i.e., we learn a data-driven bilinear surrogate model of the form
\begin{equation}\label{eq:safedmd_surrogate}
    K_{\textbf{u}}^{\Delta t} = K_{\textbf{0}}^{\Delta t} + \sum_{i=1}^{m} u_i (K_{\textbf{e}_i}^{\Delta t} - K_{\textbf{0}}^{\Delta t})
\end{equation}
using $d$ i.i.d.\ data samples for $\textbf{u} = \textbf{0}$ and $\textbf{u} = \textbf{e}_i$, $i \in [1:m]$. 
Due to the constant observable $\psi_0(\textbf{x}) \equiv 1$, $\psi_k(\textbf{0}) = 0$ for $k \in [1:M]$, and $f(\textbf{0},\textbf{0}) = \textbf{0}$, we impose the following structure on the surrogate of the Koopman operator
\begin{equation}\nonumber
    K_{\textbf{0}}^{\Delta t} = \begin{bmatrix}
        1 & \textbf{0}^\top \\ \textbf{0} & A
    \end{bmatrix}, \quad K_{\textbf{e}_i}^{\Delta t} = \begin{bmatrix}
        1 & \textbf{0}^\top \\ \textbf{b}_{i} & B_i
    \end{bmatrix},\; i\in[1:m].
\end{equation}
The unknown matrices $A$, $B_i$ and the vector $\textbf{b}_{i}$ result from solving the linear regression problems 
\begin{align*}
    A & = \operatornamewithlimits{argmin}_{A\in\mathbb{R}^{M\times M}} \|Y_\mathbf{0} - A X_\mathbf{0}\|_\mathrm{F}, \\
    \begin{bmatrix} 
        \textbf{b}_{i} & B_i
    \end{bmatrix} & = \operatornamewithlimits{argmin}_{\textbf{b}_{i} \in \mathbb{R}^M, B_i\in\mathbb{R}^{M\times M}} \|Y_{\textbf{e}_i} - \begin{bmatrix}
        \textbf{b}_{i} & B_i
    \end{bmatrix} X_{\textbf{e}_i}\|_\mathrm{F}
\end{align*}
for $i \in [1:m]$, where $\|\cdot\|_\mathrm{F}$ is the Frobenius norm. The data matrices are 
$X_{\textbf{0}} = \begin{bmatrix}
    \textbf{0}_M & I_{M}
\end{bmatrix} \begin{bmatrix}
            \Phi(\textbf{x}_{\textbf{0},1}) & \cdots & \Phi(\textbf{x}_{\textbf{0},d})
\end{bmatrix}$,
\begin{align*}
        X_{\textbf{e}_i} &= \begin{bmatrix}
            \Phi(\textbf{x}_{\textbf{e}_i,1}) & \cdots & \Phi(\textbf{x}_{\textbf{e}_i,d})
        \end{bmatrix}, \ i \in [1:m], 
        \\
        Y_{\textbf{u}} &= \begin{bmatrix}
            \textbf{0}_M & \hspace*{-2mm}I_{M}
        \end{bmatrix} \begin{bmatrix}
            \Phi(x(\Delta t;\textbf{x}_{\textbf{u},1},u)) & \hspace*{-2mm}\cdots\hspace*{-2mm} & \Phi(x(\Delta t;\textbf{x}_{\textbf{u},d},u))
        \end{bmatrix}\hspace*{-1mm}.
\end{align*}
As shown in~\cite[Cor.~3.2]{strasser2024safedmd}, we have the following \emph{proportional} bound: for any probabilistic tolerance $\delta \in (0,1)$, amount of data $d_0 \in \mathbb{N}$, and sampling rate~$\Delta t$, there are constants $c_x,c_u \in \mathcal{O}(\nicefrac{1}{\sqrt{\delta d_0}}+\Delta t^2)$ such that
\begin{equation}\label{eq:error-bound-safEDMD}
    \|(\mathcal{K}_{\textbf{u}}^{\Delta t}\Phi)(\textbf{x}) - K_{\textbf{u}}^{\Delta t} \Phi(\textbf{x})\| \leq c_x \|\Phi(\textbf{x}) - \Phi(\textbf{0})\| + c_u \| \textbf{u} \|
\end{equation}
holds for all $\textbf{x} \in \mathbb{X}$ and $\textbf{u} \in \mathbb{U}$ with probability $1-\delta$ provided $d \geq d_0$, where $\mathbb{U}\subset\mathbb{R}^m$ is compact. 
In particular, the bound on the estimation error~\eqref{eq:error-bound-safEDMD} can be guaranteed for arbitrarily small constants $c_x,c_u>0$ for sufficiently many data points~$d_0$ and a small enough sampling rate~$\Delta t$.
For the construction of the terminal ingredients in the proposed MPC approach, it is crucial that~\eqref{eq:error-bound-safEDMD} formulates a proportional error bound, i.e., the right-hand side vanishes for $(\textbf{x},\textbf{u})=(\textbf{0},\textbf{0})$.

By compactness of~$\mathbb{X}$ and~$\mathbb{U}$, the bound~\eqref{eq:error-bound-safEDMD} on the estimation error implies that, for any $\varepsilon>0$ and probabilistic tolerance $\delta \in (0,1)$, there is an amount of data $d_0\in \mathbb{N}$ and a maximal time step $\Delta t_0> 0$ such that, for all $\Delta t \leq \Delta t_0$ and $d\geq d_0$, the SafEDMD surrogate model~\eqref{eq:safedmd_surrogate} satisfies the following bound with probability least $1-\delta$:
\begin{equation}\label{eq:error-bound}
        \| (\mathcal{K}_{\textbf{u}}^{\Delta t}\Phi)(\textbf{x}) - K_{\textbf{u}}^{\Delta t} \Phi(\textbf{x}) \| 
        \leq \varepsilon \qquad\,\forall\,\textbf{x} \in \mathbb{X}, \textbf{u} \in \mathbb{U}.
\end{equation}
For a sequence of control values $\mathfrak{u} = (\textbf{u}_{\kappa})_{{\kappa}=0}^{N-1} \subseteq \mathbb{U}$, the ${\kappa}$-step prediction, ${\kappa}\in[1:N]$, resulting from the surrogate model~\eqref{eq:safedmd_surrogate}, denoted by $x_{\mathfrak{u}}({\kappa};\hat{\textbf{x}})$, reads
\begin{equation}\label{eq:prediction:k_step}
    \begin{bmatrix} 
        \textbf{0}_n & I_{n} & 0_{n \times M-n} 
    \end{bmatrix} 
    K^{\Delta t}_{\textbf{u}_{{\kappa}-1}} \hspace*{-0.mm}\cdots K^{\Delta t}_{\textbf{u}_0} \Phi(\hat{\textbf{x}}).
\end{equation}
Then, using Inequality~\eqref{eq:error-bound} and the triangle inequality yields
\begin{equation}\label{eq:error-bound:k-step}
    \| (\mathcal{K}_{\textbf{u}_{{\kappa}-1}}^{\Delta t} \hspace*{-2.mm}\cdots \mathcal{K}_{\textbf{u}_0}^{\Delta t} \Phi)(\textbf{x}) - K^{\Delta t}_{\textbf{u}_{{\kappa}-1}} \hspace*{-2.mm}\cdots K^{\Delta t}_{\textbf{u}_0} \Phi(\textbf{x}) \| 
    \hspace*{-0.5mm}\leq\hspace*{-0.5mm} \varepsilon \hspace*{-0.5mm}\sum_{i=0}^{{\kappa}-1} L_\mathcal{K}^i
\end{equation}
for all $\textbf{x} \in \mathbb{X}$ and $\textbf{u} \in \mathbb{U}$ with $L_\mathcal{K} := \max_{\textbf{u} \in \mathbb{U}}  \| \mathcal{K}_\textbf{u}^{\Delta t} \| $.

In conclusion, SafEDMD allows us to derive a data-driven surrogate model capable of making multi-step predictions with arbitrary accuracy supposing that sufficient data is available and the sampling period~$\Delta t$ is small enough. 
The latter can be mitigated by collecting data for smaller~$\Delta t$ and, then, construct the predictors by applying the derived models multiple times. 
This explains why the sampling period~$\Delta t$ does not explicitly occur for generator-based surrogate models, see~\cite{bold:grune:schaller:worthmann:2023,strasser:schaller:worthmann:berberich:allgower:2024a}. 
However, from a practical viewpoint, operator-based models are desirable as they do not rely on derivative data, see~\cite{strasser2024safedmd} for a detailed discussion.

\section{EDMD-based MPC with terminal conditions}
\label{sec:MPC:scheme}

We propose a SafEDMD-based MPC controller with terminal conditions and present a notion of practical asymptotic stability, which will be verified in the subsequent section. A key feature is that the formulation of the terminal conditions is carried out in the lifted space, where we leverage the bilinear structure of the surrogate model for an explicit construction using SafEDMD in Subsection~\ref{sec:sub:SafEDMD}. 

In view of the constant observable function contained in $\Phi$, cf.~\eqref{eq:lifting-function}, we set $\widehat{\Phi}(\textbf{x}) := \begin{bmatrix} \textbf{0}_M & I_M \end{bmatrix} \Phi(\textbf{x})$. Then, $\widehat{\Phi}(\textbf{0}) = \textbf{0}$ holds.
Let the compact sets $\mathbb{X} \subseteq \mathbb{R}^{n}$ and $\mathbb{U} \subseteq \mathbb{R}^{m}$ represent the state and control constraints, respectively. We consider a terminal region $\mathbb{X}_{f} \subseteq \mathbb{X}$ with
\[
    \mathbb{X}_{f} := \{ \textbf{x} \in \mathbb{R}^{n} \mid \widehat{\Phi}(\textbf{x})^\top P^{-1} \widehat{\Phi}(\textbf{x}) \leq c \}
\]
parametrized by an $(M \times M)$-matrix $P = P^\top \succ 0$ and $c>0$. Further, we define the terminal cost $V_f:\mathbb{X}_f \to \mathbb{R}$ by
\[
    V_f(\textbf{x}):=\widehat{\Phi}(\textbf{x})^\top P^{-1} \widehat{\Phi}(\textbf{x}).
\]
We define admissibility of a control-input sequence. 
\begin{definition}[Admissibility]
    $\mathfrak{u} = (\textbf{u}_{\kappa})_{{\kappa}=0}^{N-1} \subset \mathbb{U}$ is said to be an admissible control sequence for $\hat{\textbf{x}} \in \mathbb{X} \subseteq \mathbb{R}^{n}$ and horizon~$N \in \mathbb{N}$, denoted by $\mathfrak{u} \in \mathcal{U}_N(\hat{\textbf{x}})$, if $x_{\mathfrak{u}}({\kappa};\hat{\textbf{x}}) \in \mathbb{X}$, ${\kappa} \in [1:N]$, and $x_{\mathfrak{u}}(N;\hat{\textbf{x}}) \in \mathbb{X}_{f}$ hold.
\end{definition}

Next, we suitably adapt the notion of admissibility to deal with approximation errors in the optimization step of the MPC algorithm, where predictions are conducted using the SafEDMD-based surrogate model. 
To this end, we use the following definition in dependence of the approximation accuracy~$\varepsilon$, $\varepsilon \in (0,\varepsilon_0]$, based on the ${\kappa}$-step error bound~\eqref{eq:error-bound:k-step}.
\begin{definition}[EDMD admissibility]\label{def:admissibility:EDMD}
    For $\varepsilon \in (0,\varepsilon_0]$, $\mathfrak{u} = (\textbf{u}_{\kappa})_{{\kappa}=0}^{N-1} \subset \mathbb{U}$ is said to be an EDMD-admissible control sequence for $\hat{\textbf{x}} \in \mathbb{X}$ and horizon~$N \in \mathbb{N}$ for the SafEDMD-based surrogate, denoted by $\mathfrak{u} \in \widehat{\mathcal{U}}_N(\hat{\textbf{x}})$, if the set inclusion 
    \begin{equation}\label{eq:admissibility:EDMD:trajectory}        
        [\textbf{0}_{n}\, I_{n}\, 0_{n\times M-n}] K^{\Delta t}_{\textbf{u}_{{\kappa}-1}} \hspace*{-1mm}\cdots  K^{\Delta t}_{\textbf{u}_0} \Phi(\hat{\textbf{x}}) \oplus \mathcal{B}_{{\bar{c}}({\kappa}) \varepsilon}(\textbf{0}) \subseteq \mathbb{X}
    \end{equation}
    holds for all ${\kappa} \in [1:N]$ with ${\bar{c}}({\kappa}) := \sum_{i=0}^{{\kappa}-1} L_\mathcal{K}^i$, where $\mathbb{X}$ is replaced by $\mathbb{X}_{f}$ for ${\kappa} = N$. 
\end{definition}

Based on this definition, we propose our EDMD-based MPC scheme with terminal conditions for the stage costs
\begin{equation}\label{eq:stagecost}
    \ell(\textbf{x},\textbf{u}) := \| \textbf{x} \|_Q^2 + \| \textbf{u} \|_R^2 := \textbf{x}^\top Q \textbf{x} + \textbf{u}^\top R \textbf{u}
\end{equation}
with $Q = Q^\top \succ 0$ and $R = R^\top \succ 0$ resembling ideas from robust MPC~\cite{mayne2005robust}.
\begin{algorithm}[Model predictive control with horizon~$N$]\label{alg:MPC}
    At each time~$k \in \mathbb{N}_0$:
    \begin{enumerate}
        \item [1)] Define $\hat{\textbf{x}} := \textbf{x}_{\mu_N^\varepsilon}(k)$ by the current state $\textbf{x}_{\mu_N^\varepsilon}(k) \in \mathbb{X}$.
        \item [2)] Let $\bar{\textbf{x}}_{\mathfrak{u}}({\kappa};\hat{\textbf{x}})$ be predicted using~\eqref{eq:prediction:k_step}. Minimize 
            \begin{equation}\nonumber
                 J_N(\hat{\textbf{x}},\mathfrak{u}) := \sum_{{\kappa}=0}^{N - 1} \ell(\bar{\textbf{x}}_{\mathfrak{u}}({\kappa};\hat{\textbf{x}}),\textbf{u}_{\kappa}) + V_f(\bar{\textbf{x}}_{\mathfrak{u}}(N;\hat{\textbf{x}}))
            \end{equation}
            w.r.t.\ $\mathfrak{u} = (\textbf{u}_\kappa)_{\kappa = 0}^{N-1} \in \widehat{\mathcal{U}}_N(\hat{\textbf{x}})$ 
            to compute the optimal sequence of control-input values $\mathfrak{u}^\star = (\textbf{u}^\star_{\kappa})_{{\kappa}=0}^{N-1}$ such that $V_N(\hat{\textbf{x}}) = J_N(\hat{\textbf{x}},\mathfrak{u}^\star)$ holds.
        \item [3)] Apply the feedback value $\mu^\varepsilon_N(\textbf{x}_{\mu_N^\varepsilon}(k)) := \textbf{u}^\star_0 \in \mathbb{U}$.
    \end{enumerate}
\end{algorithm}

\noindent Our goal is to show recursive feasibility and practical asymptotic stability of EDMD-based MPC as formulated in Algorithm~\ref{alg:MPC}. 
To this end, we recall \cite[Def.~11.9]{grune:pannek:2017} --~practical asymptotic stability w.r.t.\ the approximation error~$\varepsilon$, i.e., that the behavior of the closed-loop (dynamical) system resembles an asymptotically-stable one until an arbitrarily small neighborhood of the origin is reached, where the size of the neighborhood depends on the approximation accuracy~$\varepsilon$.
\begin{definition}[Practical asymptotic stability]\label{def:stability:practical}
    The origin is said to be practically asymptotically stable (PAS) w.r.t.\ the approximation error on the set $A \subseteq \mathbb{X}$ containing the origin in its interior if there exists $\beta \in \mathcal{K}\mathcal{L}$ such that: for each $r>0$, there is $ \varepsilon_0 > 0$ such that, for all $\varepsilon \in (0,\varepsilon_0]$ satisfying condition~\eqref{eq:error-bound}, 
    the solution $x_{\mu_N^\varepsilon}(\cdot,\hat{\textbf{x}})$ of 
    \begin{align}\label{eq:ex_cl}
        \textbf{x}_{\mu_N^\varepsilon} ({k}+1) = f(\textbf{x}_{\mu_N^\varepsilon}({k}),\mu_N^\varepsilon(\textbf{x}_{\mu_N^\varepsilon}({k}))),
    \end{align}
    $\textbf{x}_{\mu_N^\varepsilon}(0) = \hat{\textbf{x}} \in A$, with $f$ from~\eqref{eq:system:sampled-data} fulfills $\textbf{x}_{\mu_N^\varepsilon}({k};\hat{\textbf{x}}) \in A$ and 
    \begin{align}\label{eq:def_stab_practical_2}
        \| \textbf{x}_{\mu_N^\varepsilon}({k};\hat{\textbf{x}}) \| \leq \max \{\beta(\| \hat{\textbf{x}} \|,{k}),r\} \qquad\forall\,{k} \in \mathbb{N}_0
    \end{align}
    for the feedback law~$\mu_N^\varepsilon$ defined in Step~3) of Algorithm~\ref{alg:MPC}. 
\end{definition}

\section{Analysis of EDMD-based MPC and construction of the terminal conditions}
\label{sec:MPC:analysis}

We first show recursive feasibility and practical asymptotic stability of the origin w.r.t.\ the MPC closed loop resulting from Algorithm~\ref{alg:MPC}. 
The proof is based on the following assumption, whose (constructive) verification is discussed in Subsection~\ref{sec:sub:SafEDMD}. 
To this end, we construct a suitable terminal region and terminal cost using SafEDMD~\cite{strasser2024safedmd} in combination with the proportional error bound~\eqref{eq:error-bound-safEDMD}.
\begin{assumption}[Terminal conditions]\label{ass:terminal:condition}
    Let a continuous sampled-data controller~$\mu: \mathbb{X}_{f} \rightarrow \mathbb{U}$ with $\mu(\textbf{0}) = \textbf{0}$ be given such that $\mathbb{X}_{f}$ is rendered invariant w.r.t.\ the discrete-time dynamics~\eqref{eq:system:sampled-data} and the Lyapunov decrease
    \begin{equation}\label{eq:SafEDMD-Lyapunov-decay}
        V_f(f(\textbf{x},\mu(\textbf{x}))) \leq V_f(\textbf{x}) - \ell(\textbf{x},\mu(\textbf{x}))
    \end{equation}
    holds for all $\textbf{x} \in \mathbb{X}_f$ with stage cost $\ell(\textbf{x},\textbf{u}) = \| \textbf{x} \|_Q^2 + \| \textbf{u} \|_R^2$.
\end{assumption}

\subsection{MPC closed-loop analysis}
\label{sec:sub:PAS}

In this part, we provide our main theoretical result. We show that, assuming initial feasibility, the MPC closed loop is well defined (recursively feasible) and exhibits practical asymptotic stability of the origin w.r.t.\ the MPC closed loop.
\begin{theorem}
\label{thm:MPC-guarantees}
    Let Assumptions~\ref{ass:invariant-dictionary} and \ref{ass:terminal:condition} hold.
    Then, the MPC closed loop is recursively feasible and the origin is practically asymptotically stable w.r.t.\ the approximation error on the set $\mathbb{X}$ in the sense of Definition~\ref{def:stability:practical}.
\end{theorem}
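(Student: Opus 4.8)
\textit{Proof plan.} The plan is to follow the classical terminal-ingredients MPC blueprint --- recursive feasibility via a shifted candidate control sequence together with the invariant terminal region~$\mathbb{X}_f$, and practical asymptotic stability via the optimal value function~$V_N$ as a Lyapunov function --- but to carry out all trajectory comparisons in the lifted space and to absorb the prediction errors by means of the $\kappa$-step error bound~\eqref{eq:error-bound:k-step}, in the spirit of the closed-loop analysis in~\cite{bold:grune:schaller:worthmann:2023}. The four building blocks are: (i)~recursive feasibility of Algorithm~\ref{alg:MPC}; (ii)~a relaxed Lyapunov inequality $V_N(f(\hat{\mathbf{x}},\mu_N^\varepsilon(\hat{\mathbf{x}}))) \le V_N(\hat{\mathbf{x}}) - \ell(\hat{\mathbf{x}},\mu_N^\varepsilon(\hat{\mathbf{x}})) + C\varepsilon$ with $C>0$ independent of~$\varepsilon$; (iii)~a sandwich $\alpha_1(\|\hat{\mathbf{x}}\|) \le V_N(\hat{\mathbf{x}}) \le \alpha_2(\|\hat{\mathbf{x}}\|) + \mathcal{O}(\varepsilon)$ with $\alpha_1,\alpha_2 \in \mathcal{K}_\infty$; and (iv)~the practical-asymptotic-stability machinery of~\cite[Ch.~11]{grune:pannek:2017}, which turns (i)--(iii) into the $\mathcal{K}\mathcal{L}$-estimate~\eqref{eq:def_stab_practical_2}.

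A useful preliminary observation is that Assumption~\ref{ass:terminal:condition} is in fact \emph{contractive}: since $\ell(\mathbf{x},\mu(\mathbf{x})) \ge \lambda_{\min}(Q)\|\mathbf{x}\|^2$ and $V_f(\mathbf{x}) = \widehat{\Phi}(\mathbf{x})^\top P^{-1}\widehat{\Phi}(\mathbf{x}) \le \lambda_{\max}(P^{-1}) L_\Phi^2 \|\mathbf{x}\|^2$ by~\eqref{eq:lifting-function}, the decrease~\eqref{eq:SafEDMD-Lyapunov-decay} yields $V_f(f(\mathbf{x},\mu(\mathbf{x}))) \le (1-\eta)V_f(\mathbf{x})$ on $\mathbb{X}_f$ for $\eta := \lambda_{\min}(Q)/(\lambda_{\max}(P^{-1})L_\Phi^2) \in (0,1]$; hence the sublevel set $\mathbb{X}_f$ is \emph{robustly} invariant under~$\mu$ with a fixed positive margin. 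For recursive feasibility I would fix~$\hat{\mathbf{x}}$ with optimal EDMD-admissible sequence $\mathfrak{u}^\star = (\mathbf{u}^\star_{\kappa})_{{\kappa}=0}^{N-1}$ and let $\mathbf{x}^+ := f(\hat{\mathbf{x}},\mathbf{u}^\star_0)$ be the true successor; because $\bar{c}(1)=1$, the one-step bound~\eqref{eq:error-bound} and the $\kappa=1$ inclusion of Definition~\ref{def:admissibility:EDMD} already give $\mathbf{x}^+ \in \mathbb{X}$. As candidate at~$\mathbf{x}^+$ one takes the shift $\tilde{\mathfrak{u}} = (\mathbf{u}^\star_1,\dots,\mathbf{u}^\star_{N-1},\mathbf{v})$ with $\mathbf{v}$ a terminal-controller value. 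The key estimate is that for ${\kappa} \in [1:N-1]$ the surrogate prediction $\bar{\mathbf{x}}_{\tilde{\mathfrak{u}}}({\kappa};\mathbf{x}^+)$ differs from $\bar{\mathbf{x}}_{\mathfrak{u}^\star}({\kappa}+1;\hat{\mathbf{x}})$ --- which satisfies its inclusion with reserved margin $\bar{c}({\kappa}+1)\varepsilon$ --- only through the re-initialization discrepancy $\|\Phi(\mathbf{x}^+) - K^{\Delta t}_{\mathbf{u}^\star_0}\Phi(\hat{\mathbf{x}})\| \le \varepsilon$ (from~\eqref{eq:error-bound} and the Koopman identity) propagated $\kappa$ times, i.e.\ by at most $L_\mathcal{K}^{\kappa}\varepsilon = (\bar{c}({\kappa}+1)-\bar{c}({\kappa}))\varepsilon$; together with the set identity $(\mathbb{X}\ominus\mathcal{B}_a)\oplus\mathcal{B}_b \subseteq \mathbb{X}\ominus\mathcal{B}_{a-b}$ (for $b\le a$) this is absorbed exactly by the tightening $\bar{c}(\cdot)$. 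The terminal inclusion $\kappa=N$ uses this together with the robust-invariance margin of~$\mathbb{X}_f$, valid for~$\varepsilon$ sufficiently small. Hence $\tilde{\mathfrak{u}} \in \widehat{\mathcal{U}}_N(\mathbf{x}^+)$, and induction over the time index gives recursive feasibility.

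Next, since $\tilde{\mathfrak{u}}$ is admissible, $V_N(\mathbf{x}^+) \le J_N(\mathbf{x}^+,\tilde{\mathfrak{u}})$, and I would telescope this against $V_N(\hat{\mathbf{x}}) = J_N(\hat{\mathbf{x}},\mathfrak{u}^\star)$: the $N-1$ shared stage costs cancel up to Lipschitz errors of~$\ell$ on the compact set $\mathbb{X}\times\mathbb{U}$ at the $\mathcal{O}(\varepsilon)$-close arguments $\bar{\mathbf{x}}_{\tilde{\mathfrak{u}}}({\kappa};\mathbf{x}^+)$ and $\bar{\mathbf{x}}_{\mathfrak{u}^\star}({\kappa}+1;\hat{\mathbf{x}})$; the single new stage-plus-terminal contribution is dominated by~\eqref{eq:SafEDMD-Lyapunov-decay} evaluated at the terminal state of the shifted trajectory, transferred to the surrogate-predicted terminal state via continuity of~$V_f$ and~\eqref{eq:error-bound:k-step}, again at cost $\mathcal{O}(\varepsilon)$; and because $\bar{\mathbf{x}}_{\mathfrak{u}^\star}(0;\hat{\mathbf{x}}) = \hat{\mathbf{x}}$ exactly, the leftover term is precisely $-\ell(\hat{\mathbf{x}},\mathbf{u}^\star_0) = -\ell(\hat{\mathbf{x}},\mu_N^\varepsilon(\hat{\mathbf{x}}))$. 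This proves (ii) with $C$ collecting the finite, $\varepsilon$-independent Lipschitz constants and geometric factors $\bar{c}(\cdot)$. For (iii), $V_N(\hat{\mathbf{x}}) \ge \ell(\hat{\mathbf{x}},\mu_N^\varepsilon(\hat{\mathbf{x}})) \ge \lambda_{\min}(Q)\|\hat{\mathbf{x}}\|^2$, while for~$\hat{\mathbf{x}}$ in a neighborhood of the origin contained in~$\mathbb{X}_f$ the feasibility of the $\mu$-rollout from~$\hat{\mathbf{x}}$ and the telescoping bound $\sum_{{\kappa}=0}^{N-1}\ell(\mathbf{x}_{\kappa},\mu(\mathbf{x}_{\kappa})) + V_f(\mathbf{x}_N) \le V_f(\hat{\mathbf{x}}) \le \lambda_{\max}(P^{-1})L_\Phi^2\|\hat{\mathbf{x}}\|^2$ (up to the usual $\mathcal{O}(\varepsilon)$-offset incurred when passing to surrogate predictions) give the upper bound.

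Plugging (ii)--(iii) into the relaxed Lyapunov inequality, $V_N$ strictly decreases while $\lambda_{\min}(Q)\|\hat{\mathbf{x}}\|^2 > C\varepsilon$, i.e.\ outside a ball of radius $\mathcal{O}(\sqrt{\varepsilon}) \to 0$ as $\varepsilon \to 0$; the practical-stability result of~\cite[Ch.~11]{grune:pannek:2017} (cf.\ also~\cite{bold:grune:schaller:worthmann:2023}) then yields $\beta \in \mathcal{K}\mathcal{L}$ such that, for each $r>0$, choosing~$\varepsilon_0$ small enough (also so that~\eqref{eq:error-bound} can be enforced, see the discussion there) makes~\eqref{eq:def_stab_practical_2} hold, and forward invariance of~$\mathbb{X}$ along the closed loop follows from recursive feasibility --- which is exactly Definition~\ref{def:stability:practical}. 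I expect the main obstacle to be the careful bookkeeping of the three nested trajectories --- the true closed loop, the true open loop from the current state, and the surrogate open loop --- in particular, making the re-initialization error fit inside the tightening $\bar{c}(\cdot)$ for feasibility and transferring Assumption~\ref{ass:terminal:condition}, stated for the true sampled-data dynamics~\eqref{eq:system:sampled-data}, to the surrogate-predicted quantities entering~$J_N$; it is precisely the proportional structure of~\eqref{eq:error-bound-safEDMD} that makes the terminal ingredients of Assumption~\ref{ass:terminal:condition} constructible near the origin, as detailed in Subsection~\ref{sec:sub:SafEDMD}.
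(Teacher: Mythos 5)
Your proposal is correct and follows the same overall route as the paper's proof: recursive feasibility via the shifted-and-prolonged candidate sequence~\eqref{eq:input-shifted:candidate}, with the propagated re-initialization error absorbed exactly by the tightening $\bar{c}(\cdot)$ of Definition~\ref{def:admissibility:EDMD}, and practical stability by telescoping $V_N$ against the candidate cost to obtain $V_N(\mathbf{x}^+)-V_N(\hat{\mathbf{x}})\le -\|\hat{\mathbf{x}}\|_Q^2+C\varepsilon$, followed by the standard practical-Lyapunov machinery. The one place where you argue genuinely differently is the re-entry of the candidate's terminal state into the (tightened) terminal set: the paper establishes a positive decrease margin by a compactness argument, taking the min--max of $V_f(\mathbf{x})-V_f(f(\mathbf{x},\mu(\mathbf{y})))$ over $\mathbb{X}_f\setminus V_f^{-1}[0,\eta]$ and treating the small sublevel set $S$ separately, whereas you extract a uniform contraction $V_f(f(\mathbf{x},\mu(\mathbf{x})))\le(1-\eta)V_f(\mathbf{x})$ with $\eta=\lambda_{\min}(Q)/(\lambda_{\max}(P^{-1})L_\Phi^2)$ directly from~\eqref{eq:SafEDMD-Lyapunov-decay} and the bound $\|\mathbf{x}\|^2\ge V_f(\mathbf{x})/(\lambda_{\max}(P^{-1})L_\Phi^2)$. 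Your version yields an explicit margin $\eta c$ that is uniform on all of $\mathbb{X}_f$ and avoids the case distinction near the origin; the paper's version needs only positive definiteness of $\ell$ rather than its quadratic lower bound and so carries over to non-quadratic stage costs. Everything else --- the $\mathcal{O}(\varepsilon)$ bookkeeping of stage-cost and terminal-cost differences via Lipschitz continuity on compact sets, the transfer of~\eqref{eq:SafEDMD-Lyapunov-decay} from the true dynamics to the surrogate-predicted terminal state at cost $\mathcal{O}(\varepsilon)$, and the concluding $\mathcal{KL}$-argument --- matches the paper, with your explicit comparison functions for $V_N$ being left there to ``standard arguments.''
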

\begin{proof}
    Let $r > 0$ be given. W.l.o.g., we assume $\mathcal{B}_r = \mathcal{B}_r(\textbf{0}) \subseteq \mathbb{X}_f$; otherwise, we reduce~$r$ until this inclusion holds while the norm bound in Definition~\ref{def:stability:practical} follows also for the original $r$. 
    In view of \eqref{eq:error-bound:k-step}, we choose a sufficiently small, but fixed estimation error~$\varepsilon>0$ such that the following construction can be conducted.
    
    Let $\eta > 0$ be such that $f(\textbf{x},\textbf{u}) \in \mathcal{B}_r$ holds for all $\textbf{x} \in S := V_f^{-1}[0,\eta] \oplus {\bar{c}}(N) \varepsilon \subseteq \mathcal{B}_r$ and all control values $\textbf{u} \in \mu(S) := \{ \textbf{v} \in \mathbb{R}^{m} \mid \exists\,\textbf{x} \in S: \mu(\textbf{x}) = \textbf{v} \}$, where $V_f^{-1}[0,\eta]$ is the sub-level set $\{ \textbf{x} \mid V_f(\textbf{x}) \leq \eta \}$.
    Furthermore, for $\hat{\textbf{x}} \in A$, let the robust EDMD admissible control-input sequence $\mathfrak{u}^\star = (\textbf{u}_{\kappa}^\star)_{{\kappa}=0}^{N-1} \in \widehat{\mathcal{U}}_N(\hat{\textbf{x}})$ be optimal in Step~2) of Algorithm~\ref{alg:MPC} meaning, in particular, that $\hat{\textbf{x}}$ is initially feasible. 
    Then, we show that the (shifted and prolonged) control sequence 
    \begin{equation}\label{eq:input-shifted:candidate}
        \mathfrak{u}^+ := \begin{pmatrix} \textbf{u}_1^\star & \cdots & \textbf{u}_{N-1}^\star & \mu(\textbf{x}_{\mathfrak{u}^\star}(N;\hat{\textbf{x}})) \end{pmatrix}
    \end{equation}
    is feasible for the successor state $\textbf{x}^+ = f(\hat{\textbf{x}},\mu^\varepsilon_N(\hat{\textbf{x}}))$ of the discrete-time dynamics~\eqref{eq:system:sampled-data} and yields a Lyapunov decrease w.r.t.\ the respective (optimal) value function~$V_N$ if $\hat{\textbf{x}} \notin S$. Otherwise $\textbf{x}^+$ is contained in $\mathcal{B}_r$ per construction.
    
    Feasibility of $\textbf{x}^+$ and $\textbf{x}_{\mathfrak{u}^+}({\kappa};\textbf{x}^+)$, ${\kappa} \in [1:N-1]$, directly follows from Assumption~\ref{ass:invariant-dictionary} and the constraint tightening in Definition~\ref{def:admissibility:EDMD}. 
    Since, in the compact set $\mathbb{X}_f \setminus V_f^{-1}[0,\eta]$,
    \[
        \min_{ \textbf{x} \in \mathbb{X}_f \setminus V_f^{-1}[0,\eta] } V_f(\textbf{x}) - V_f(f(\textbf{x},\mu(\textbf{x}))) > 0
    \]
    holds for the \emph{minimal decrease}, continuity of $\mu$ implies
    \[
        \xi := \min_{\textbf{x} \in \mathbb{X}_f \setminus V_f^{-1}[0,\eta]}\ \max_{ \textbf{y} \in Y }\ V_f(\textbf{x}) - V_f(f(\textbf{x},\mu(\textbf{y}))) > 0,
    \]
    $Y = (\mathbb{X}_f \cap \mathcal{B}_{{\bar{c}}(N) \varepsilon}(\textbf{x})) \setminus V_f^{-1}[0,\eta]$, for sufficiently small~$\varepsilon$.
    In conclusion, we get a decrease by applying $\textbf{u}^+_{N-1}$ in the last prediction step, whenever the predicted penultimate state is not contained in the $\eta$-ball centered at the origin, which completes the proof of recursive feasibility in view of the prelude before defining~$\mathfrak{u}^\star$.

In the following, we follow the line of reasoning in~\cite{limon2009input} and verify a suitable decay of the optimal value function~$V_N$, which serves as a Lyapunov function. 
To this end, for the ${k}$th state~$\hat{\textbf{x}} = \textbf{x}^\text{MPC}_{\mu^{\varepsilon}_N}({k})$ of the MPC closed-loop trajectory, let $\mathfrak{u}^\star = \mathfrak{u}^\star(\hat{\textbf{x}}) = (\textbf{u}^\star_{\kappa})_{{\kappa}=0}^{N-1}$ again denote the optimal control-input sequence.
Then, for the successor state $\textbf{x}^+ = f(\hat{\textbf{x}},\textbf{u}^\star(0))$, we define the candidate control-input sequence at time ${k}+1$ by~\eqref{eq:input-shifted:candidate}.
This yields 
\begin{align*}
    &V_N(\textbf{x}^+) - V_N(\hat{\textbf{x}}) + \ell(\textbf{x}_{\mathfrak{u}^\star}(0),\textbf{u}^\star_0) \\
    &\leq \sum_{{\kappa}=0}^{N-2} \left( \ell(\textbf{x}_{\mathfrak{u}^+}({\kappa}),\textbf{u}^+_{\kappa}) - \ell(\textbf{x}_{\mathfrak{u}^\star}({\kappa}+1),\textbf{u}^\star_{{\kappa}+1} \right) \\
    &\quad + \underbrace{\ell(\textbf{x}_{\mathfrak{u}^+}(N-1),\textbf{u}^+_{N-1}) + V_f(\textbf{x}_{\mathfrak{u}^+}(N))}_{\leq V_f(\textbf{x}_{\mathfrak{u}^+}(N-1)) + \| P^{-1} \| \tilde{c} L_\Phi \varepsilon} - V_f(\textbf{x}_{\mathfrak{u}^\star}(N))
\end{align*}
with $\textbf{x}_{\mathfrak{u}^+}({\kappa}) = \textbf{x}_{\mathfrak{u}^+}({\kappa};\textbf{x}^+)$ and $\textbf{x}_{\mathfrak{u}^\star}({\kappa}) = \textbf{x}_{\mathfrak{u}^\star}({\kappa};\hat{\textbf{x}})$ for ${\kappa} \in [0:N]$ in view of the assumed Lyapunov decrease~\eqref{eq:SafEDMD-Lyapunov-decay}, 
where the term $\| P^{-1} \| \tilde{c} L_\Phi \varepsilon$ results from the difference of $V_f(\textbf{x}_{\mathfrak{u}^+}(N)) - V_f(f(\textbf{x}_{\mathfrak{u}^+}(N-1),\textbf{u}_{N-1}^+))$ analogously to the following calculations:
For vectors $\textbf{a},\textbf{b} \in \mathbb{R}^{n}$ and a matrix $M \in \mathbb{R}^{n \times n}$, the estimate 
\begin{align*}
    \| \textbf{a} \|_M^2 \hspace*{-0.25mm}-\hspace*{-0.25mm} \| \textbf{b} \|_M^2 \hspace*{-0.25mm}=\hspace*{-0.25mm} (\textbf{a}+\textbf{b})^{\hspace*{-0.5mm}\top} \hspace*{-0.5mm}M (\textbf{a}-\textbf{b}) \hspace*{-0.25mm}\leq\hspace*{-0.25mm} \|M\| \|\textbf{a}+\textbf{b}\| \|\textbf{a} \hspace*{-0.25mm}-\hspace*{-0.25mm} \textbf{b}\|
\end{align*}
holds. Hence, every difference in the stage cost in the sum (${\kappa} \in [0:N-2]$) can be estimated by
\begin{align*}
    \|Q\| \| \textbf{x}_{\mathfrak{u}^+}({\kappa}) + \textbf{x}_{\mathfrak{u}^\star}({\kappa}+1) \| \| \textbf{x}_{\mathfrak{u}^+}({\kappa}) - \textbf{x}_{\mathfrak{u}^\star}({\kappa}+1) \|,
\end{align*}
where the second factor can be uniformly estimated on the compact set~$\mathbb{X}$ by ${\tilde{c}}$, while the third summand is uniformly bounded by ${\bar{c}}({\kappa}+1) \varepsilon$. 

Next, we estimate the term $V_f(\textbf{x}_{\mathfrak{u}^+}(N-1)) - V_f(\textbf{x}_{\mathfrak{u}^\star}(N))$ by using the definition $V_f(\textbf{x})=\widehat{\Phi}(\textbf{x})^\top P^{-1} \widehat{\Phi}(\textbf{x})$ and the Lipschitz continuity of $\widehat{\Phi}$ by 
\begin{align*}
    & {\tilde{c}} \|P^{-1}\| L_\Phi \| \textbf{x}_{\mathfrak{u}^+}(N-1) - \textbf{x}_{\mathfrak{u}^\star}(N) \|
    \leq {\tilde{c}} \|P^{-1}\| L_\Phi {\bar{c}}(N) \varepsilon.
\end{align*}
Combining both estimates and using $\ell(\textbf{x}_{\mathfrak{u}^\star}(0),\textbf{u}^\star_0) \geq \| \hat{\textbf{x}} \|_Q^2$, we get
\begin{align*}
    & V_N(\textbf{x}^+) - V_N(\hat{\textbf{x}})  \\
    \leq & {\tilde{c}} \varepsilon \Big[ \| P^{-1} \| L_\Phi {(1+\bar{c}}(N)) + \| Q \| {\sum_{\kappa=0}^{N-2}}
    {\bar{c}}({\kappa}+1) \Big] - \| \hat{\textbf{x}} \|_Q^2, 
\end{align*}
i.e., the desired Lyapunov decrease outside the sub-level set~$V_f^{-1}[0,\eta]$ contained in~$\mathcal{B}_r$. 
Using standard arguments~\cite{grune:pannek:2017}, this completes the proof of practical asymptotic stability.
\end{proof}

We emphasize that the used constraint tightening ensuring recursive feasibility relies on a simple propagation of a one-step error bound to bound the deviation between nominal and true state along the prediction horizon~\cite{marruedo2002input}. Whereas the present article serves as a starting point, more advanced robust MPC techniques may be applied to reduce conservatism, see, e.g.,~\cite{kohler2020computationally} or~\cite{rakovic2021robust} and the references therein. 

Naturally, the question arises whether the proposed MPC controller may indeed render the closed loop asymptotically stable. 
Based on the above line of reasoning, this cannot be easily shown.
To see this, one may \emph{trace back} (estimate from above similarly to the derived Inequality~\eqref{eq:error-bound:k-step}) the term $\| \textbf{x}_{\mathfrak{u}^+}({\kappa}) - \textbf{x}_{\mathfrak{u}^\star}({\kappa}+1)\|$, ${\kappa} \in [1:N-1]$, to $\| \textbf{x}_{\mathfrak{u}^+}(0) - \textbf{x}_{\mathfrak{u}^\star}(1;\hat{\textbf{x}}) \|$ and, then, invoke the proportional error bound~\eqref{eq:error-bound-safEDMD}. However, this yields a linear term in norm, which cannot be bounded by the quadratic \emph{decrease} established in our proof arbitrarily close to the origin.
In conclusion, we cannot directly compensate the linear error bound locally around the origin, preventing to conclude \emph{asymptotic} stability.

\subsection{Construction of the terminal conditions using SafEDMD}
\label{sec:sub:SafEDMD}

Next, we show how to design the terminal conditions of the proposed MPC Algorithm~\ref{alg:MPC} such that Assumption~\ref{ass:terminal:condition} holds. 
In particular, we exploit the proportional error bound~\eqref{eq:error-bound-safEDMD} of SafEDMD to design a locally stabilizing control law for all $\textbf{x} \in \mathbb{X}_{f}$.
To this end, we leverage~\cite[Thm.~4.1]{strasser2024safedmd} to infer the sampled-data controller
\begin{equation}\label{eq:controller}
    \mu(\textbf{x}) = (I-L_w(\Lambda^{-1} \otimes \widehat{\Phi}(\textbf{x})))^{-1} L P^{-1}_\mu \widehat{\Phi}(\textbf{x}),
\end{equation}
where $\otimes$ stands for the Kronecker product and the matrices $L\in\mathbb{R}^{m\times M}$, $L_w\in\mathbb{R}^{m\times Mm}$, $0\prec\Lambda=\Lambda^\top\in\mathbb{R}^{m\times m}$, and $0\prec P_\mu=P_\mu^\top\in\mathbb{R}^{M\times M}$ are chosen such that two linear matrix inequalities are satisfied, see~\cite[Ineq.~(17),~(18)]{strasser2024safedmd}.
More precisely, $\mu$ renders the terminal region $\mathbb{X}_f$ invariant w.r.t. the discrete-time system~\eqref{eq:system:sampled-data} and guarantees $\|\widehat{\Phi}(f(\textbf{x},\mu(\textbf{x})))\|^2_{P^{-1}_\mu} < \|\widehat{\Phi}(\textbf{x})\|^2_{P^{-1}_\mu}$ for any $\textbf{x} \in \mathbb{X}_f \setminus \{ \textbf{0} \}$ if~\eqref{eq:error-bound-safEDMD} holds. 
To satisfy~\eqref{eq:SafEDMD-Lyapunov-decay}, we exploit the prior strict inequality, i.e., there exists $\epsilon_x,\epsilon_u>0$ such that
\begin{align*}
    \|\widehat{\Phi}(f(\textbf{x},\mu(\textbf{x})))\|^2_{P^{-1}_\mu}
    & \leq \|\widehat{\Phi}(\textbf{x})\|^2_{P^{-1}_\mu} - \epsilon_x\| \textbf{x} \|^2-\epsilon_u\|\mu(\textbf{x})\|^2 \\
    &\leq \|\widehat{\Phi}(\textbf{x})\|^2_{P^{-1}_\mu} - \epsilon \ell(\textbf{x},\mu(\textbf{x}))
\end{align*}
with $\epsilon=\min\left\{\nicefrac{\epsilon_x}{\|Q\|},\nicefrac{\epsilon_u}{\|R\|}\right\}$ for all $\textbf{x} \in \mathbb{X}_{f}$.
Hence, defining $P = \epsilon P_\mu$, $c=\epsilon^{-1}$ with the constructed terminal conditions using SafEDMD ensures the required Assumption~\ref{ass:terminal:condition}. We conclude this section highlighting an advantage of dual-mode MPC combining Algorithm~\ref{alg:MPC} and SafEDMD.

\begin{remark}[Dual-mode MPC]\label{rk:dual-mode-MPC}
    Besides providing a suitable data-driven way to construct the terminal ingredients, SafEDMD can be also used to obtain exponential stability~\cite[Thm.~4.1]{strasser2024safedmd}.
    Once the practically-stable region is reached (which is guaranteed by Theorem~\ref{thm:MPC-guarantees}), one may switch to the stabilizing terminal controller. The underlying reason is that, in the region of practical stability, the uncertainty attached to the SafEDMD predictions outweighs the advantages of its prediction capability.
\end{remark}

\section{Simulation}
\label{sec:numerics}
In the following, we evaluate our proposed MPC scheme. We consider an undamped inverted pendulum 
\begin{align*}
    \dot{x}_1 = x_2, \quad \dot{x}_2 = \frac{g}{l}\sin(x_1) - \frac{b}{ml^2}x_2 + \frac{1}{ml^2}u
\end{align*}
and apply SafEDMD to obtain a data-driven bilinear surrogate model. 
Algorithm~\ref{alg:MPC} is implemented in Matlab using MPCTools~\cite{risbeck:rawlings:2016} with its interface to the nonlinear optimization software CasADi~\cite{andersson:gillis:horn:rawlings:diehl:2019}.
For the simulation, we choose $b=0.5$, $l=1$, $m=1$, and $g=9.81$, and collect $d=6000$ data samples for each constant control input $u(t)\equiv \bar{u}$ with $\bar{u}\in\{0,1\}$, where we sample uniformly from $\mathbb{X}:=[-15,15]^2$ with the sampling rate $\Delta t = 0.01$. 
We impose the control constraint $\mathbb{U}:=[-25,25]$.
Further, we choose $Q=I$ and $R=0.1I$ in the stage cost~\eqref{eq:stagecost}. 
We set $\Phi(x) = \begin{bmatrix}
    1 & x_1 & x_2 & \sin(x_1)
\end{bmatrix}^\top$ and design the terminal ingredients based on~\cite[Thm.~4.1]{strasser2024safedmd} with $c_x=c_u=\SI{3e-4}{}$ for the learning error bound~\eqref{eq:error-bound-safEDMD} with $\varepsilon$ in~\eqref{eq:error-bound}, $S_z=0$, $R_z=2.5$, and $Q_z$ according to ~\cite[Procedure~8]{strasser:schaller:worthmann:berberich:allgower:2024a}.
We deploy the MPC scheme with a horizon $N=1.1/\Delta t=110$ to stabilize the unstable equilibrium at the origin. 
Fig.~\ref{fig:exmp-pendulum-sin} depicts the closed-loop behavior under the proposed controller. 
    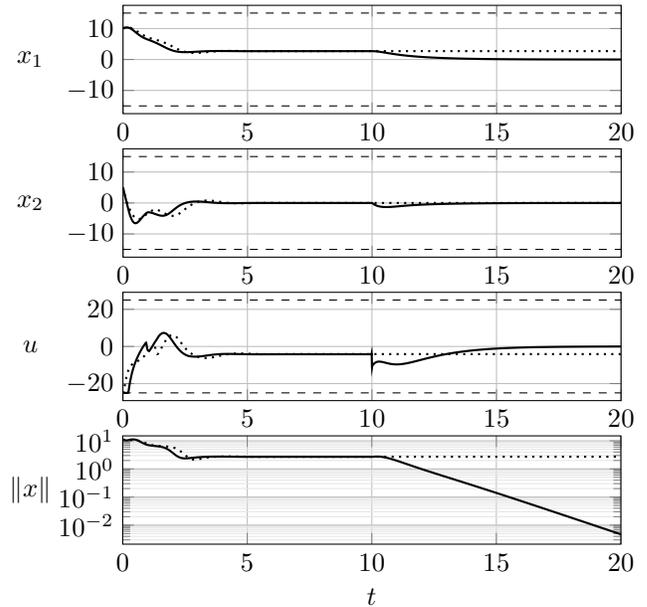
\begin{figure}[t]
        \centering
        \begin{tikzpicture}[%
            /pgfplots/every axis x label/.style={at={(0.5,0)},yshift=-20pt},%
            /pgfplots/every axis y label/.style={at={(0,0.5)},xshift=-35pt,rotate=0},%
          ]%
            \begin{axis}[
                name={x1},%
                legend pos= south east,
                xmin=0,
                xmax=20,
                ylabel=$x_1$,
                ymin=-17.5,
                ymax=17.5,
                grid=both,
                width = 0.95\columnwidth,
                height = 0.35\columnwidth,
                minor grid style={gray!20},
                unbounded coords = jump,
                restrict y to domain =-100:100,
            ]
                \addplot[black,thick,smooth] table [x index=0,y index=1] {data/exmp-inverse-pendulum-sin.dat};
                \addplot[black,thick,dotted] table [x index=0,y index=7] {data/exmp-inverse-pendulum-sin.dat};
                \draw[dashed] (axis cs:0,15) -- (axis cs:20,15);
                \draw[dashed] (axis cs:0,-15) -- (axis cs:20,-15);
            \end{axis}
            \begin{axis}[
                name={x2},%
                at={(x1.below south west)},anchor={north west},%
                legend pos= south east,
                xmin=0,
                xmax=20,
                ylabel=$x_2$,
                ymin=-17.5,
                ymax=17.5,
                grid=both,
                width = 0.95\columnwidth,
                height = 0.35\columnwidth,
                minor grid style={gray!20},
                unbounded coords = jump,
                restrict y to domain =-100:100,
            ]
                \addplot[black,thick,smooth] table [x index=0,y index=2] {data/exmp-inverse-pendulum-sin.dat};
                \addplot[black,thick,dotted] table [x index=0,y index=8] {data/exmp-inverse-pendulum-sin.dat};
                \draw[dashed] (axis cs:0,15) -- (axis cs:20,15);
                \draw[dashed] (axis cs:0,-15) -- (axis cs:20,-15);
            \end{axis}
            \begin{axis}[
                name={u},%
                at={(x2.below south west)},anchor={north west},%
                legend pos= south east,
                xmin=0,
                xmax=20,
                ylabel=$u$,
                ymin=-29.16667,
                ymax=29.16667,
                grid=both,
                width = 0.95\columnwidth,
                height = 0.35\columnwidth,
                minor grid style={gray!20},
                unbounded coords = jump,
                restrict y to domain =-100:100,
            ]
                \addplot[black,thick,smooth] table [x index=0,y index=3] {data/exmp-inverse-pendulum-sin.dat};
                \addplot[black,thick,dotted] table [x index=0,y index=9] {data/exmp-inverse-pendulum-sin.dat};
                \draw[dashed] (axis cs:0,25) -- (axis cs:20,25);
                \draw[dashed] (axis cs:0,-25) -- (axis cs:20,-25);
            \end{axis}
            \begin{semilogyaxis}[
                name={eerror},%
                at={(u.below south west)},anchor={north west},%
                legend pos= south east,
                xlabel=$t$,
                xmin=0,
                xmax=20,
                ylabel=$\|x\|$,
                ymax=15,
                yminorgrids = true,
                grid=both,
                width = 0.95\columnwidth,
                height = 0.35\columnwidth,
                minor grid style={gray!20},
                unbounded coords = jump,
            ]
                \addplot[black,thick,smooth] table [x index=0,y index=10] {data/exmp-inverse-pendulum-sin.dat};
                \addplot[black,thick,dotted] table [x index=0,y index=12] {data/exmp-inverse-pendulum-sin.dat};
            \end{semilogyaxis}
        \end{tikzpicture}
        \vspace*{-\baselineskip}
        \setbox1=\hbox{\begin{tikzpicture}[baseline]
            \draw[black,thick] (0,.6ex)--++(0.95em,0);
        \end{tikzpicture}}
        \setbox2=\hbox{\begin{tikzpicture}[baseline]
            \draw[black,thick,dotted] (0,.6ex)--++(0.95em,0);
        \end{tikzpicture}}
        \setbox3=\hbox{\begin{tikzpicture}[baseline]
            \draw[black,thick,dashed] (0,.6ex)--++(0.95em,0);
        \end{tikzpicture}}
        \vspace*{-0.5\baselineskip}
        \caption{Closed-loop results of the proposed MPC controller (\usebox1) with input constraints (\usebox3), where we switch to the terminal control law after $t=10$. MPC results for a linear Koopman surrogate 
        model (\usebox2) are included for comparison.}
        \label{fig:exmp-pendulum-sin}
        \vspace*{-1.5\baselineskip}
    \end{figure}

As expected due to Theorem~\ref{thm:MPC-guarantees}, the state is \emph{practically} stabilized, i.e., converges to a set-point close to the origin within the designed terminal region.
Notably, by following Remark~\ref{rk:dual-mode-MPC}, i.e., switching to the stabilizing terminal control law after reaching the (invariant) terminal region at $t\approx 10$, we can remove the offset and obtain a dual-mode MPC which \emph{asymptotically} stabilizes the origin.
For comparison, we apply MPC based on a \emph{linear} Koopman model (L-MPC) based on EDMDc~\cite{brunton:brunton:proctor:kutz:2016}, which is a commonly used Koopman-based control technique~\cite{korda:mezic:2018a}. 
Here, L-MPC stabilizes the nonlinear system with a remaining offset to the origin, but offers no guarantees w.r.t.\ closed-loop stability of the nonlinear system.
Although we only have shown that the proposed MPC scheme yields asymptotic stability when operated in dual-mode with SafEDMD, we briefly illustrate that the MPC control law may also be asymptotically stabilizing without switching. 
Applying SafEDMD for the observables $\Phi_1(x)= \begin{bmatrix} 1& x_1& x_2& \sin(x_1)&x_2& \cos(x_1)\end{bmatrix}^\top$, the MPC scheme asymptotically stabilizes the origin (see Fig.~\ref{fig:exmp-pendulum-sin-x2cos}).
    Here, L-MPC still results in an offset and oscillating closed-loop behavior when applied to the nonlinear system.
    \begin{figure}[tb]
        \centering
        \begin{tikzpicture}[%
            /pgfplots/every axis x label/.style={at={(0.5,0)},yshift=-20pt},%
            /pgfplots/every axis y label/.style={at={(0,0.5)},xshift=-35pt,rotate=0},%
          ]%
            \begin{axis}[
                name={u},%
                legend pos= south east,
                xmin=0,
                xmax=10,
                ylabel=$u$,
                ymin=-29.16667,
                ymax=29.16667,
                grid=both,
                width = 0.95\columnwidth,
                height = 0.35\columnwidth,
                minor grid style={gray!20},
                unbounded coords = jump,
                restrict y to domain =-100:100,
            ]
                \addplot[black,thick,smooth] table [x index=0,y index=3] {data/exmp-inverse-pendulum-sin-x2cos.dat};
                \addplot[black,thick,dotted] table [x index=0,y index=9] {data/exmp-inverse-pendulum-sin-x2cos.dat};
                \draw[dashed] (axis cs:0,25) -- (axis cs:20,25);
                \draw[dashed] (axis cs:0,-25) -- (axis cs:20,-25);
            \end{axis}
            \begin{semilogyaxis}[
                name={eerror},%
                at={(u.below south west)},anchor={north west},%
                legend pos= south east,
                xlabel=$t$,
                xmin=0,
                xmax=10,
                ylabel=$\|x\|$,
                ymax=20,
                yminorgrids = true,
                grid=both,
                width = 0.95\columnwidth,
                height = 0.35\columnwidth,
                minor grid style={gray!20},
                unbounded coords = jump,
            ]
                \addplot[black,thick,smooth] table [x index=0,y index=10] {data/exmp-inverse-pendulum-sin-x2cos.dat};
                \addplot[black,thick,dotted] table [x index=0,y index=12] {data/exmp-inverse-pendulum-sin-x2cos.dat};
            \end{semilogyaxis}
        \end{tikzpicture}
        \vspace*{-1.5\baselineskip}
        \setbox1=\hbox{\begin{tikzpicture}[baseline]
            \draw[black,thick] (0,.6ex)--++(0.95em,0);
        \end{tikzpicture}}
        \setbox2=\hbox{\begin{tikzpicture}[baseline]
            \draw[black,thick,dotted] (0,.6ex)--++(0.95em,0);
        \end{tikzpicture}}
        \setbox3=\hbox{\begin{tikzpicture}[baseline]
            \draw[black,thick,dashed] (0,.6ex)--++(0.95em,0);
        \end{tikzpicture}}
        \caption{Closed-loop results using $\Phi_1$ and the proposed MPC controller (\usebox1) compared to L-MPC (\usebox2).
        }
        \label{fig:exmp-pendulum-sin-x2cos}
        \vspace*{-1\baselineskip}
    \end{figure}
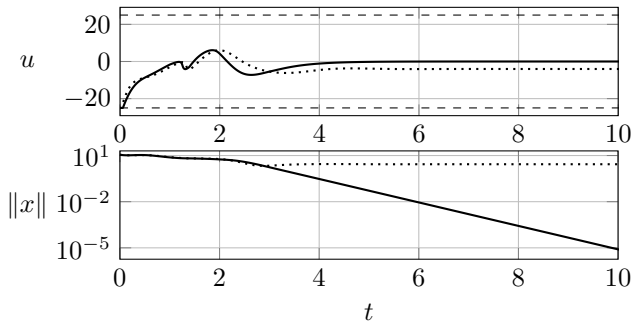

\section{Conclusions and outlook}
\label{sec:conclusions}
We proposed a data-driven MPC scheme with terminal conditions, where a variant of EDMD is used to generate a bilinear surrogate of the nonlinear system. 
The terminal region and costs are constructed using the recently proposed SafEDMD learning architecture. 
We rigorously showed practical asymptotic stability w.r.t.\ the MPC closed loop. 
Further, employing a dual-mode MPC approach based on SafEDMD yields exponential stability. 
The results are illustrated by a numerical example showing the efficacy in comparison with MPC based on linear models obtained from EDMDc.

Future work will be devoted to the removal of Assumption~\ref{ass:invariant-dictionary} by using the uniform bounds on the approximation error recently proposed for kernel EDMD in~\cite{kohne2024infty}.

\addtolength{\textheight}{-12cm}

\bibliographystyle{IEEEtran}
\bibliography{literature.bib}

\begin{thebibliography}{10}
\providecommand{\url}[1]{#1}
\csname url@samestyle\endcsname
\providecommand{\newblock}{\relax}
\providecommand{\bibinfo}[2]{#2}
\providecommand{\BIBentrySTDinterwordspacing}{\spaceskip=0pt\relax}
\providecommand{\BIBentryALTinterwordstretchfactor}{4}
\providecommand{\BIBentryALTinterwordspacing}{\spaceskip=\fontdimen2\font plus
\BIBentryALTinterwordstretchfactor\fontdimen3\font minus
  \fontdimen4\font\relax}
\providecommand{\BIBforeignlanguage}[2]{{%
\expandafter\ifx\csname l@#1\endcsname\relax
\typeout{** WARNING: IEEEtran.bst: No hyphenation pattern has been}%
\typeout{** loaded for the language `#1'. Using the pattern for}%
\typeout{** the default language instead.}%
\else
\language=\csname l@#1\endcsname
\fi
#2}}
\providecommand{\BIBdecl}{\relax}
\BIBdecl

\bibitem{grune:pannek:2017}
L.~Gr{\"u}ne and J.~Pannek, \emph{Nonlinear model predictive control}.\hskip
  1em plus 0.5em minus 0.4em\relax Springer, 2017.

\bibitem{muller2017quadratic}
M.~A. M{\"u}ller and K.~Worthmann, ``Quadratic costs do not always work in
  {MPC},'' \emph{Automatica}, vol.~82, pp. 269--277, 2017.

\bibitem{chen1998quasi}
H.~Chen and F.~Allg{\"o}wer, ``A quasi-infinite horizon nonlinear model
  predictive control scheme with guaranteed stability,'' \emph{Automatica},
  vol.~34, no.~10, pp. 1205--1217, 1998.

\bibitem{martin:schon:allgower:2023b}
T.~Martin, T.~B. Sch{\"o}n, and F.~Allg{\"o}wer, ``Guarantees for data-driven
  control of nonlinear systems using semidefinite programming: A survey,''
  \emph{Annual Reviews in Control}, p. 100911, 2023.

\bibitem{williams2015data}
M.~O. Williams, I.~G. Kevrekidis, and C.~W. Rowley, ``{A Data–Driven
  Approximation of the Koopman Operator: Extending Dynamic Mode
  Decomposition},'' \emph{J. Nonlinear Science}, vol.~25, pp. 1307--1346, 2015.

\bibitem{mezic:2005}
I.~Mezi{\'c}, ``Spectral properties of dynamical systems, model reduction and
  decompositions,'' \emph{Nonlinear Dyn.}, vol.~41, pp. 309--325, 2005.

\bibitem{BrunBudi22}
S.~L. Brunton, M.~Budišić, E.~Kaiser, and J.~N. Kutz, ``{Modern Koopman
  Theory for Dynamical Systems},'' \emph{SIAM Review}, vol.~64, no.~2, p.
  229–340, 2022.

\bibitem{brunton:brunton:proctor:kutz:2016}
S.~L. Brunton, B.~W. Brunton, J.~L. Proctor, and J.~N. Kutz, ``{Koopman}
  invariant subspaces and finite linear representations of nonlinear dynamical
  systems for control,'' \emph{PloS one}, vol. 11(2), pp. 1--19, 2016.

\bibitem{surana:2016}
A.~Surana, ``{Koopman} operator based observer synthesis for control-affine
  nonlinear systems,'' in \emph{Proc. 55th IEEE Conference on Decision and
  Control (CDC)}, 2016, pp. 6492--6499.

\bibitem{williams:hemati:dawson:kevrekidis:rowley:2016}
M.~O. Williams, M.~S. Hemati, S.~T. Dawson, I.~G. Kevrekidis, and C.~W. Rowley,
  ``Extending data-driven {Koopman} analysis to actuated systems,''
  \emph{IFAC-PapersOnLine}, vol.~49, no.~18, pp. 704--709, 2016.

\bibitem{peitz:otto:rowley:2020}
S.~Peitz, S.~E. Otto, and C.~W. Rowley, ``Data-driven model predictive control
  using interpolated {Koopman} generators,'' \emph{SIAM Journal on Applied
  Dynamical Systems}, vol.~19, no.~3, pp. 2162--2193, 2020.

\bibitem{folkestad2021koopman}
C.~Folkestad and J.~W. Burdick, ``Koopman {NMPC}: {K}oopman-based learning and
  nonlinear model predictive control of control-affine systems,'' in
  \emph{Proc. IEEE International Conference on Robotics and Automation (ICRA)},
  2021, pp. 7350--7356.

\bibitem{korda:mezic:2018b}
M.~Korda and I.~Mezi{\'c}, ``On convergence of extended dynamic mode
  decomposition to the {Koopman} operator,'' \emph{J. Nonlinear Science},
  vol.~28, no.~2, pp. 687--710, 2018.

\bibitem{mezic2022numerical}
I.~Mezi{\'c}, ``On numerical approximations of the {K}oopman operator,''
  \emph{Mathematics}, vol.~10, no.~7, p. 1180, 2022.

\bibitem{zhang:zuazua:2023}
C.~Zhang and E.~Zuazua, ``A quantitative analysis of {Koopman} operator methods
  for system identification and predictions,'' \emph{Comptes Rendus.
  M{\'e}canique}, vol. 351, no.~S1, pp. 1--31, 2023.

\bibitem{nuske:peitz:philipp:schaller:worthmann:2023}
F.~N{\"u}ske, S.~Peitz, F.~Philipp, M.~Schaller, and K.~Worthmann,
  ``Finite-data error bounds for {Koopman}-based prediction and control,''
  \emph{J. Nonlinear Science}, vol. 33:14, 2023.

\bibitem{schaller:worthmann:philipp:peitz:nuske:2023}
M.~Schaller, K.~Worthmann, F.~Philipp, S.~Peitz, and F.~N{\"u}ske, ``Towards
  reliable data-based optimal and predictive control using extended {DMD},''
  \emph{IFAC-PapersOnLine}, vol. 56(1), pp. 169--174, 2023.

\bibitem{philipp2024extended}
F.~Philipp, M.~Schaller, S.~Boshoff, S.~Peitz, F.~N{\"u}ske, and K.~Worthmann,
  ``Variance representations and convergence rates for data-driven
  approximations of {K}oopman operators,'' \emph{arXiv:2402.02494}, 2024.

\bibitem{korda:mezic:2018a}
M.~Korda and I.~Mezi{\'c}, ``Linear predictors for nonlinear dynamical systems:
  {Koopman} operator meets model predictive control,'' \emph{Automatica},
  vol.~93, pp. 149--160, 2018.

\bibitem{zhang:pan:scattolini:yu:xu:2022}
X.~Zhang, W.~Pan, R.~Scattolini, S.~Yu, and X.~Xu, ``Robust tube-based model
  predictive control with {K}oopman operators,'' \emph{Automatica}, vol.
  137:110114, 2022.

\bibitem{kanai:yamakita:2022}
M.~Kanai and M.~Yamakita, ``Linear model predictive control with lifted
  bilinear models by {K}oopman-based approach,'' \emph{SICE Journal of Control,
  Measurement, and System Integration}, vol.~15, no.~2, pp. 162--171, 2022.

\bibitem{bold:grune:schaller:worthmann:2023}
L.~Bold, L.~Gr{\"u}ne, M.~Schaller, and K.~Worthmann, ``Data-driven {MPC} with
  stability guarantees using extended dynamic mode decomposition,'' \emph{IEEE
  Transactions on Automatic Control}, 2025, available online:
  \url{https://doi.org/10.1109/TAC.2024.3431169}.

\bibitem{strasser2024safedmd}
R.~Str{\"a}sser, M.~Schaller, K.~Worthmann, J.~Berberich, and F.~Allg{\"o}wer,
  ``{SafEDMD}: A certified learning architecture tailored to data-driven
  control of nonlinear dynamical systems,'' \emph{arXiv:2402.03145}, 2024.

\bibitem{goswami:paley:2021}
D.~Goswami and D.~A. Paley, ``Bilinearization, reachability, and optimal
  control of control-affine nonlinear systems: A {Koopman} spectral approach,''
  \emph{IEEE Transactions on Automatic Control}, vol.~67, no.~6, pp.
  2715--2728, 2021.

\bibitem{korda:mezic:2020}
M.~Korda and I.~Mezi{\'c}, ``Optimal construction of {Koopman} eigenfunctions
  for prediction and control,'' \emph{IEEE Transactions on Automatic Control},
  vol.~65, no.~12, pp. 5114--5129, 2020.

\bibitem{strasser:schaller:worthmann:berberich:allgower:2024a}
R.~Str{\"a}sser, M.~Schaller, K.~Worthmann, J.~Berberich, and F.~Allgöwer,
  ``{Koopman}-based feedback design with stability guarantees,'' \emph{IEEE
  Transactions on Automatic Control}, 2024, available online
  \url{https://doi.org/10.1109/TAC.2024.3425770}.

\bibitem{philipp:schaller:worthmann:peitz:nuske:2023b}
F.~Philipp, M.~Schaller, K.~Worthmann, S.~Peitz, and F.~N{\"u}ske, ``Error
  analysis of kernel {EDMD} for prediction and control in the {Koopman}
  framework,'' \emph{arXiv:2312.10460}, 2023.

\bibitem{mayne2005robust}
D.~Q. Mayne, M.~M. Seron, and S.~Rakovi{\'c}, ``Robust model predictive control
  of constrained linear systems with bounded disturbances,'' \emph{Automatica},
  vol.~41, no.~2, pp. 219--224, 2005.

\bibitem{limon2009input}
D.~Limon, T.~Alamo, D.~M. Raimondo, D.~M. De~La~Pe{\~n}a, J.~M. Bravo,
  A.~Ferramosca, and E.~F. Camacho, ``Input-to-state stability: a unifying
  framework for robust model predictive control,'' \emph{Nonlinear Model
  Predictive Control: Towards New Challenging Applications}, pp. 1--26, 2009.

\bibitem{marruedo2002input}
D.~L. Marruedo, T.~Alamo, and E.~F. Camacho, ``Input-to-state stable {MPC} for
  constrained discrete-time nonlinear systems with bounded additive
  uncertainties,'' in \emph{Proc. 41st IEEE Conference on Decision and Control
  (CDC)}, vol.~4, 2002, pp. 4619--4624.

\bibitem{kohler2020computationally}
J.~K{\"o}hler, R.~Soloperto, M.~A. M{\"u}ller, and F.~Allg{\"o}wer, ``A
  computationally efficient robust model predictive control framework for
  uncertain nonlinear systems,'' \emph{IEEE Transactions on Automatic Control},
  vol.~66, no.~2, pp. 794--801, 2020.

\bibitem{rakovic2021robust}
S.~V. Rakovi{\'c}, ``Robust model predictive control,'' in \emph{Encyclopedia
  of systems and control}.\hskip 1em plus 0.5em minus 0.4em\relax Springer,
  2021, pp. 1965--1975.

\bibitem{risbeck:rawlings:2016}
\BIBentryALTinterwordspacing
M.~J. Risbeck and J.~B. Rawlings, ``{MPCTools}: Nonlinear model predictive
  control tools for {CasADi},'' 2016. [Online]. Available:
  \url{https://bitbucket.org/rawlings-group/octave-mpctools}
\BIBentrySTDinterwordspacing

\bibitem{andersson:gillis:horn:rawlings:diehl:2019}
J.~A.~E. Andersson, J.~Gillis, G.~Horn, J.~B. Rawlings, and M.~Diehl,
  ``{CasADi} -- {A} software framework for nonlinear optimization and optimal
  control,'' \emph{Math. Prog. Comp.}, vol.~11, no.~1, pp. 1--36, 2019.

\bibitem{kohne2024infty}
F.~K{\"o}hne, F.~Philipp, M.~Schaller, A.~Schiela, and K.~Worthmann,
  ``{$L^{\infty}$}-error bounds for approximations of the {K}oopman operator by
  kernel extended dynamic mode decomposition,'' \emph{arXiv:2403.18809}, 2024.

\end{thebibliography}

\end{document}